\def\be{\begin{equation}}   
\def\ee{\end{equation}}
\def\red#1{{\color{black} #1}} 
\def\={\;=\;}    
\def\+{\,+\,}    
\def\binom#1#2{\Bigl({#1\atop#2}\Bigr)} 
\title{Information thermodynamics of financial markets:\\ the Glosten-Milgrom model}
\author[1]{L\'eo Touzo}
\author[2]{Matteo Marsili}
\author[3,2]{Don Zagier}
\affil[1]{{\small Department of Physics, \'Ecole Normale Sup\'erieure, 24 rue Lhomond, 75005 Paris,  France}}
\affil[2]{{\small The Abdus Salam International Centre for Theoretical Physics, Strada Costiera 11, 34151 Trieste, Italy}}
\affil[3]{{\small Max Planck Institute for Mathematics, Vivatsgasse 7, 53111 Bonn, Germany }}
\date{July 2020}
\begin{document}
\maketitle

\begin{abstract}
  The Glosten-Milgrom model describes a single asset market, where
  informed traders interact with a market maker, in the presence of
  noise traders. {We derive an analogy between this financial model and 
  a Szil\'ard information engine by {\em i)} showing that the optimal work
  extraction protocol in the latter coincides with the pricing strategy of the
  market maker in the former and  {\em ii)} defining a market analogue of the physical 
  temperature from
  the analysis of the distribution of market orders. Then we show that the 
  expected gain of informed traders is bounded above by the product of this  
  market temperature with the amount of information that informed traders have, 
  in exact analogy with the corresponding formula for the maximal expected
  amount of work that can be extracted from a cycle of the information engine.}
  \red{This suggests that} recent ideas from information thermodynamics may 
  shed light on financial markets, and lead to generalised
  inequalities, in the spirit of the extended second law of thermodynamics.
\end{abstract}

Information plays a key role in the working of financial markets. News
about the performance of a company drives the activity of traders in
the market, and as a result, the corresponding stock price adjusts in
such a way as to reflect this information. The standard argument in
economics \cite{cochrane2009asset} maintains that prices adjust to the
discounted expected value of the future stream of payoffs that a share
of that stock will deliver. In this ideal picture, (equilibrium)
prices satisfy the efficient market hypothesis
\cite{fama1960efficient,samuelson2016proof}, i.e. they reflect all
available information, and no arbitrage (i.e risk-less gain) should be
possible. This has been the dominating paradigm in finance. Several
theoretical models have been proposed to show how information
aggregation occurs, under specific market
mechanisms~\cite{kyle1985continuous,glosten1985bid,gueant2016financial,brunnermeier2001asset,bouchaud2018trades},
and how markets become informationally
efficient~\cite{berg2001statistical}. Empirical studies suggest that
the efficient market
hypothesis~\cite{malkiel2003efficient,RJShiller,bouchaud2009markets}
is typically satisfied in real markets, in spite of some
anomalies~\cite{schwert2003anomalies}.

This would suggest that information theory should play a major role in
the theory of financial markets. But it does not.  This is reminiscent
of the state of affairs in statistical physics {a few} decades ago. 
Only very recent advances in stochastic
thermodynamics~\cite{van2015ensemble,jarzynski2011equalities,seifert2012stochastic}
have led to recognise the relation between the physics of
thermodynamic transformations and information
theory~\cite{parrondo2015thermodynamics}. There are good reasons to
expect that a similar connection with the theory of financial markets
also exists. Indeed, the no-arbitrage hypothesis and the second law of
thermodynamics both have the conceptual flavour of a no-free-lunch
statement: the first maintains that no risk-less gain can be extracted
from financial trading, the second {asserts} that no work can be
extracted from a cyclic transformation of a thermodynamic system at
constant temperature. The second law of thermodynamics refers to the
average work over many cycles. Yet, work can be extracted in a single
{realisation}  of the engine's cycle with some
probability~\cite{jarzynski2011equalities}. Likewise, we know that
statistical arbitrages do exist in real markets, but they're rare and
they do not persist in the long run~\cite{schwert2003anomalies,bouchaud2018trades}. Also,
work can be extracted from a system, if one can acquire some
information on its microscopic state, through a measurement. Likewise,
an informed trader can extract a positive gain from trading.
Furthermore, the no-arbitrage hypothesis in finance is formalised
within the theory of martingales, which has recently found several
applications in stochastic thermodynamics (see
e.g. \cite{neri2019integral,chetrite2019martingale}). This suggests a
common conceptual basis for finance and stochastic thermodynamics.

In order to explore this connection, this paper focuses on the
Glosten-Milgrom (GM) model~\cite{glosten1985bid}, which provides a
simple setting for a detailed analysis. This describes a single asset
market where a population or informed and non-informed traders
interact with a market maker, who sets the prices.

We derive a general analogy between the GM model and the Szil\'ard information
engine~\cite{szilard-german,parrondo2001szilard,kim2011information},
which is a prototype model in information thermodynamics. Within this
analogy, we identify {a ``market temperature"} $T$ (Eq.~\eqref{T}), which is
the {analogue} in the GM model of the temperature of the heat bath in 
the Szil\'ard engine. We also show that
the optimal pricing policy of the market maker in the GM model
corresponds to the optimal work extraction protocol in the Szil\'ard
engine. The sequence of trades in the financial system corresponds to a 
measurement process in the physics analogy. 
The generalised second law of 
thermodynamics~\cite{sagawa2012thermodynamics} then states that, in the physical system, 
the maximal expected value of the work $W$ that can be extracted
from a cycle of the engine equals $T$ times the amount of information $H[Y]$ 
provided by the measurement, i.e., 
\begin{equation}   \label{ineqW}
  {\mathbb{E}[W] \,\le\, TH[Y]}\,.
\end{equation}
In the financial system $H[Y]$ quantifies the amount of information
that informed traders have on the value of the asset. In exact analogy with Eq.~\eqref{ineqW},
we prove that the expected value of the gain $G$ of informed
traders in the GM model is bounded above by the product of \red{the market temperature} 
$T$ and $H[Y]$, i.e.,
\begin{equation}
  \label{Gineq}
  {\mathbb{E}[G] \,\le\, TH[Y]}\,.
\end{equation}
This is the main result of this paper. We also show that the
inequality (\ref{Gineq}) becomes an asymptotic equality in the limit  when 
noise traders dominate the market, which corresponds to the limit $T\to\infty$.

The similarity between inequalities~(\ref{ineqW}) and~(\ref{Gineq}) puts the analogy
between thermodynamics and finance discussed above on firmer grounds, suggesting
a common conceptual basis for both.

The inequality~(\ref{Gineq}) is reminiscent of the classical
result~\cite{cover1999elements} relating the growth rate of the
capital to the information that investor has on the odds, in a scheme
of lotteries. The connection between this result and stochastic
thermodynamics has been explored by several
authors~\cite{hirono2015jarzynski,vinkler2016analogy,ito2016backward,dinis2020phase}. In
particular, Vinkler {\em et al.} \cite{vinkler2016analogy} derive a
physical analogy of the gambler's problem which is  similar to the
one we shall discuss below for the GM model. 
{\red{In this analogy, the heat bath has no direct counterpart, so the temperature is arbitrary. 
As we shall see, the analogy of the GM extends further, because we identify $T$ in terms 
of the parameters of the model.} Taken together, these
results suggests a broader range of validity of inequalities such as
(\ref{Gineq}), that bound the gain that informed traders can achieve.

In the rest of the paper, we first define the GM model in Section~\ref{sec:GM}. 
In Section \ref{sec:ITGM} we discuss its connection with a Szil\'ard information engine 
and we derive the inequality (\ref{ineqW}). There we also discuss the gain of informed 
traders and present the main result (Eq.~\eqref{Gineq}).  The analytic proofs of the 
inequality~\eqref{Gineq} are presented in Section~\ref{sec:proofs}, together with a 
discussion of the analytic properties of the expected gain. We conclude with a general 
discussion of our findings and of their possible extensions in Section~\ref{sec:conclusions}. 

We should perhaps add a word about the authorship of the various parts of this paper, since 
they are quite disparate in both subject matter and style. The material relating to financial 
markets, stochastic thermodynamics and information theory (Sections~\ref{sec:GM},~ \ref{sec:ITGM} 
and~\ref{sec:conclusions}) was written by the first two authors only, but with the asymptotic 
statement~\eqref{WeakAsymp} and the inequality~\eqref{Gineq} originally being conjectural.  Analytic proofs 
of these two results were then found by the third author, and Section~\ref{sec:proofs}, which
contains these proofs and some other related material, is due to him.

\section{The Glosten-Milgrom model}
\label{sec:GM}

We consider a simplified version of the GM
model~\cite{glosten1985bid}, that describes a population of traders,
who buy and sell a stock from a dealer, whom we shall call the {\em
  market maker}. The stock has a value $Y$ which is either one, with
probability $P(Y=1)=p$, or zero (with probability $1-p$). The value of
$Y$ is known to a fraction $\nu$ of the traders -- the informed
traders -- and is unknown to both the market maker and the remaining
fraction ($1-\nu$) of the traders -- the noise traders.  
The amount of information that informed traders have is quantified by the 
entropy\footnote{\label{notations}We shall use capital letters for random variables and
  the corresponding lowercase letter for their {realised} values. At time
  $t$, the sequence of realised transactions up to that time are
  known, whereas the transactions for later times are unknown. So we
  shall use $x_{\tau}$ for $\tau\le t$ and $X_{\tau}$ for
  $\tau>t$. The expected value of a random variable $X$ with
  distribution $P(X=x)=p(x)$ is denoted as
  $\mathbb{E}[X]=\sum_x xp(x)$. The entropy of $X$ is given by the
  standard formula
  $H[X]=-\sum_x p(x)\log p(x)=-\mathbb{E}[\log p(X)]$. Likewise the
  mutual information between random variables $X$ and $Y$ is given by
  $I(X,Y)=\sum_{x,y}p(x,y)\log\frac{ p(x,y)}{p(x)p(y)}$, in terms of
  the joint distribution $p(x,y)$ and the marginals $p(x), p(y)$. 
  The mutual information is also expressed in terms of the Kullback-Leibler divergence 
$I(X,Y)=D_{KL}\left(P(X,Y)||P(X)P(Y)\right)=\mathbb{E}\left[\log\frac{P(X,Y)}{P(X)P(Y)}\right]$. 
We shall use $D_{KL}\left(P(Y|x)||P(Y)\right)$ to indicate the Kullback-Leibler divergence 
between the distribution of $Y$ condifional to $X=x$ and the unconditional distribution. We
  measure information in nats, using natural logarithms.}
\begin{equation}
\label{eq:HY}
H[Y]\=\mathbb{E}[-\log P(Y)]\=-p\log p-(1-p)\log (1-p)\,.
\end{equation}

Trading occurs sequentially at discrete times, $t=0,1,2,\ldots$. At each time, one
trader is randomly drawn from the population and submits an order,
either to buy ($X_t=1$) or to sell ($X_t=0$) one unit of the stock.  An
informed trader will buy the stock if $Y=1$ and will sell it if
$Y=0$. A noise trader will buy or sell with equal probability. The
probability that the market maker will receive a buy ($X_t=1$) or a
sell ($X_t=0$) order can be written in the compact form
\begin{equation}
  \label{PxY}
  P(X_t=x|Y=y)\=\frac{1-\nu}{2}+\nu\delta_{x,y}.
\end{equation}

The market maker doesn't know the value of $Y$, nor whether
she\footnote{\red{We follow Osborne's suggestion (see preface in \cite{osborne1994course}) on the gender of players in game theory, referring to the market maker as female and to traders as males.}} is dealing with an informed or an uninformed
trader. Yet she knows the probabilities $p$ and $\nu$.  Before observing 
the next order $X_{t+1}$, the market
maker announces an ask price $a_{t+1}$, which is the price at which 
she will sell, and a bid price
$b_{t+1}$, which is the price at which she will buy. These
prices are set in order to ensure that her profit is zero, on
average. This condition is standard in competitive markets. A market
maker with negative expected profit will be exploited by traders, and
if the expected profit is positive, she will be outcompeted by other
market makers. Since trading is a zero-sum game, this condition arises
from a minimax principle: traders submit their orders to the market
maker that make minimal profit (i.e. minimal loss to them). Given
this, market makers will try to maximise their profits.  The profit of
the market maker for selling (buying) the stock at price \red{$a_{t+1}$ at time $t+1$ is
$a_{t+1}-Y$ (respectively $Y-b_{t+1}$).  The equations for $a_{t+1}$ and $b_{t+1}$ are
a consequence of the zero expected} profit condition, where the expectation on
$Y$ is taken conditional on the information that the dealer has up to
time $t$. This leads to

\begin{align}
  a_{t+1}&\=\mathbb{E}[Y|x_{\le t},X_{t+1}=1]\label{ask}\\
  b_{t+1}&\=\mathbb{E}[Y|x_{\le t},X_{t+1}=0],
\end{align}
where $x_{\le t}=(x_0,x_1,\ldots,x_{t})$ is the observed history of
transactions up to time $t$. The price of the realised transaction
will be $p_{t+1}=a_{t+1}$ if $X_{t+1}=1$ and $p_{t+1}=b_{t+1}$
otherwise.

\begin{figure}[ht]
  \centering \includegraphics[width=0.6\textwidth,angle=0]{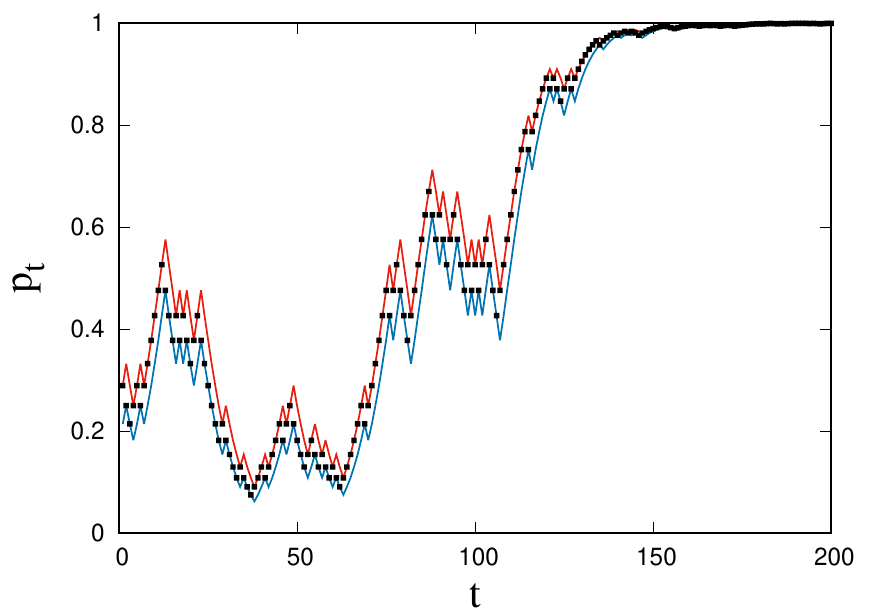}
  \caption{\label{fig:trajectories} Sample trajectories of the price
    $p_t$ ($\smallblacksquare$) for $Y=1$, 
    \red{$p=1/4$ and $\nu=0.1$. The values of $a_t$ and $b_t$ are {shown}
    as full red and blue lines}, respectively.}
\end{figure}

As time goes on, the dealer will acquire more and more information on
the true value $Y$ of the stock, from the sequence of trades
$x_{\le t}$. As a consequence, both bid and ask prices will converge
to $Y$ as $t\to\infty$. In order to see this, let us derive explicit
expressions for $a_{t+1}$ and $b_{t+1}$. We start {by} using {Bayes's} formula
to evaluate Eq.~(\ref{ask}):
\begin{eqnarray}
  a_{t+1} & = & P(Y=1|x_{\le t},X_{t+1}=1) \\
          & = & \frac{P(x_{\le t},X_{t+1}=1|Y=1)P(Y=1)}{P(x_{\le t},X_{t+1}=1)}\\
          & = & \biggl[1+\frac{1-p}{p}\Bigl(\frac{1+\nu}{1-\nu}\Bigr)^{t-2n_t-1}\biggr]^{-1} \;.
                \label{askexp}
\end{eqnarray}
Here we used $P(Y=1)=p$ and the fact that, according to
Eq.~(\ref{PxY}), $x_t$ are independently drawn conditional on~$Y$. Hence
\[
  P(x_{\le t},x_{t+1}=1|Y=1)\=\Bigl(\frac{1+\nu}{2}\Bigr)^{n_t+1}\,\Bigl(\frac{1-\nu}{2}\Bigr)^{t-n_t},
  \qquad   n_t\={\sum_{t'=0}^t x_{t'}}\,.
\]
As a consequence $a_t$ depends only on the number $n_t$ of buy trades
up to $t$, not on the order in which they occurred.  Similarly, we find
\begin{equation}
  \label{bidexp}
  b_{t+1}\= \biggl[1+\frac{1-p}{p}\Bigl(\frac{1+\nu}{1-\nu}\Bigr)^{t-2n_t+1}\biggr]^{-1}.
\end{equation}
The dealer will always sell at a price $a_t$ higher than the one
($b_t$) at which she buys. The difference $a_t-b_t\ge 0$ is called the
{bid-ask spread} in finance \cite{bouchaud2018trades}.

Notice that $n_t$ follows a binomial distribution
\begin{equation}
  \label{binomnt}
  P(n_t|Y)\=\binom t{n_t}\,\;\Bigl(\frac{1-\nu}{2}+\nu Y\Bigr)^{n_t}\,\Bigl(\frac{1+\nu}{2}-\nu Y\Bigr)^{t-n_t}\,,
\end{equation}
that depends on $Y$. The sequence $x_{\le t}$ of trades will reveal
which of the two distributions {\hbox{$P(n_t|Y=0)$}} or $P(n_t|Y=1)$ is realised.

If $Y=1$, the number of buy {transactions} will almost surely grow {like}
$n_t\simeq t (1+\nu)/2$. Likewise, if $Y=0$ we find
$n_t\simeq t (1-\nu)/2$. Eqs. (\ref{askexp},\ref{bidexp}) then show
that, almost {surely,}\footnote{A more refined analysis shows that
$\alpha_t=\log\left (1/a_t-1\right)$ performs a random walk with
  drift $\nu(1-2Y)\log\frac{1+\nu}{1-\nu}$. Hence for $Y=1$,
  $\alpha_t\to -\infty$ and $\alpha_t\to \infty$ for $Y=0$, as
  $t\to\infty$. Therefore $a_t=\left(1+e^{\alpha_t}\right)^{-1} \to Y$ as $t\to\infty$. 
  Similarly $b_t$
  can be expressed in terms of the same random walk process $\alpha_t$, but
  with different initial conditions.}
\begin{equation}
  \lim_{t\to\infty} a_t\=\lim_{t\to\infty} b_t\=Y\,.
\end{equation}
This implies that the price will ultimately aggregate all information
conveyed by the activity of informed traders, and converge to the true
value of the stock. Fig. \ref{fig:trajectories} show sample
trajectories for $a_t$ and $b_t$ in both cases $Y=0$ and $Y=1$.

The realised price $p_t$ at time $t$, i.e. the price at which the
transaction occurs, can be written as
\begin{equation}
  p_t\=\mathbb{E}[Y|x_{\le t}]. 
\end{equation}
This coincides with $a_t$ if $x_t=1$ and with $b_t$ if the last trader
was a seller ($x_t=0$). The price satisfies the martingale
property\footnote{The price $p_{t+1}=\mathbb{E}[Y|x_{\le t},X_{t+1}]$
  at time $t$ is a random variable, because it depends on the
  realisation $X_{t+1}$ of the next trade.
  $\mathbb{E}[p_{t+1}|x_{\le t}]=\mathbb{E}[\mathbb{E}[Y|x_{\le t},X_{t+1}]]$ 
  is the expected value of this random variable on~$X_{t+1}$. By the
  property of expectations, this equals~$\mathbb{E}[Y|x_{\le t}]=p_t$.}
$\mathbb{E}[p_{t+1}|x_{\le t}]=p_t$~\cite{glosten1985bid}.

\section{The Information Thermodynamics of the GM model}
\label{sec:ITGM}

The protocol that the market maker follows generates a stochastic
price process. This process is driven by the information $x_t$ on the
type of orders that the market maker receives, that leads her to
discover the true value of $Y$. In this section we map this process to
a Szil\'ard engine and we show that optimal work extraction coincides
with the optimal policy that the market maker implements.

\subsection{The Szil\'ard engine}

The Szil\'ard engine is the simplest realisation of the Maxwell's
demon idea, that the second law of thermodynamics can be violated if
some information on the microscopic state of a system is available. In
its original form, it consists of a box of unit lateral length in each
direction, containing a single point-like particle. The box is in
contact with a heat bath at temperature $T$. At a certain time $t$, a
measurement $Y$ is taken that reveals whether the particle is in the
left ($Y=0$) or the right ($Y=1$) side of the box. This allows the
observer to extract work from this system, using the following
protocol. A wall is inserted without friction and instantaneously
right after the measurement. The measurement reveals on which side the
particle is confined. Hence the observer knows on which side the
particle will push the wall. The force exerted on the wall by the
particle can be exploited to do work, e.g. to lift a weight. There are
several details in this idealised system that have been discussed at length
elsewhere~\cite{szilard-german,parrondo2001szilard,kim2011information}. In
the present context, let it suffice to say that in the limit of a
quasi-static process (i.e. when the wall moves infinitely slowly), the
work that can be extracted can be computed considering the expansion
of an ideal gas composed of a single particle. This is characterised
by an equation of state $PV=Nk_BT$, where $N$ is the number of
particles and $k_B$ is Boltzmann constant, that we shall take equal to
one ($k_B=1$) in what follows. After the wall is inserted, the
pressure on the side containing the particle is $P=T/V$, since
$N=1$. On the other side the pressure is zero, because $N=0$. In an
isothermal expansion, the work done for a change $dV$ in the volume is
$dW=PdV$. Integrating this from the initial volume $V_i=1/2$ to the
final volume $V_f=1$ yields,
\begin{equation}
  W\=\int_{1/2}^1\frac{dV}{V}\=T\log 2.
\end{equation}
This protocol leaves the system in the same state as the one before
the measurement. So the whole process constitutes a cycle whose
outcome is to extract $\log 2$ units of work from the system, in
apparent\footnote{Later Landauer~\cite{landauer1961irreversibility}
  showed that, considering the cost of storing the measurement in a
  memory, no violation of the second law of thermodynamics occurs.}
violation of the second law of thermodynamics. It has recently been
realised \cite{sagawa2012thermodynamics,parrondo2015thermodynamics} 
that, measuring a quantity $M$ during an isothermal cyclic transformation, 
allows an observer to extract, on average, an amount of work
\begin{equation}
  \label{gen2ndlaw}
  \mathbb{E}[W]\;\le\; TI(Y,M)
\end{equation}
that cannot exceed $T$ times the mutual information $I(Y,M)$ between
the the measurement $M$ and the microscopic state $Y$ of the
system. Eq.~(\ref{gen2ndlaw}) generalises the second law of
thermodynamics, for processes that use information acquired by a 
measurement system. Note that $I(Y,M)\,\le\,H[Y]$, with equality holding when 
the measurement $M$ fully determines the state $Y$ of the system.

\subsection{An analogy with the Szil\'ard engine} \label{SzilardAnalogy}

The relation of the GM model to stochastic thermodynamics is sketched
in Fig. \ref{fig1}. The variable $Y$ describes the microscopic state
of a particle in a box of lateral size $L_x=1$ at constant temperature
$T$. At time $t=0$, a wall is introduced in the box, separating it in
two parts. If $Y=0$ the particle is in the left side of the box,
whereas if $Y=1$ the particle is on the right of the partition.  At
each consecutive transaction $t=1,2,\ldots$, the market maker receives
a noisy signal $x_t$. Using this, she can operate a feedback protocol
by moving the wall, in order to extract work\footnote{We adopt the
  convention that extracted work (i.e. work done by the system) is
  positive and work done on the system is negative.} $W$. As we shall
see, the optimal work extraction protocol coincides with the market
maker pricing strategy in the GM model.

\begin{figure}[ht]
  \centering
  \includegraphics[width=0.6\textwidth,angle=0]{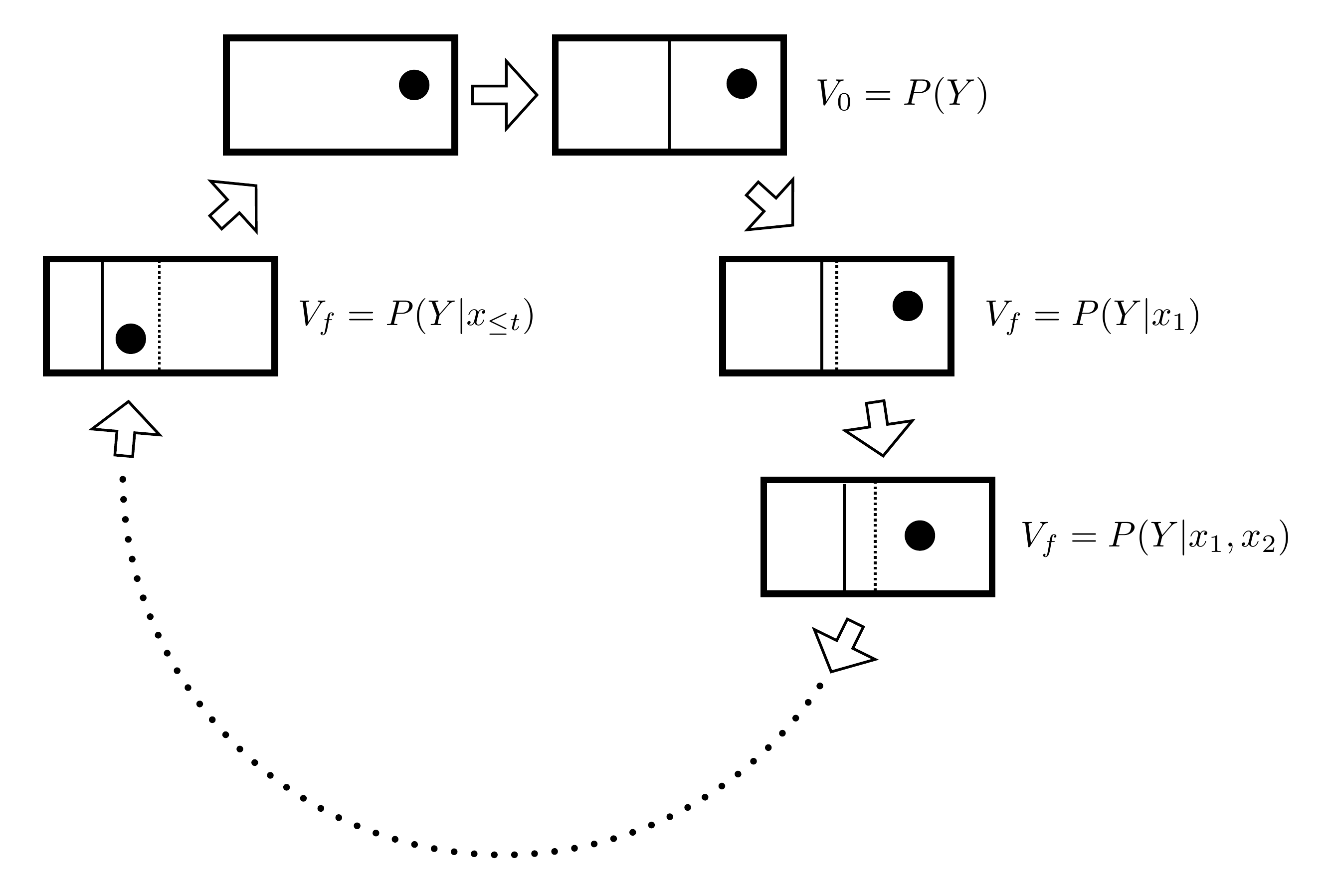}
  \caption{\label{fig1} Sequential expansion in a Szil\'ard
    engine. Initially the particle is in equilibrium in the box. At
    $t=0$, a wall is inserted at position $1-p$, and the particle is
    either on the left ($Y=0$) or on the right ($Y=1$) of the wall. At
    $t=1$, a noisy measurement $x_1$ is made of the position of the
    particle and the wall is moved from the initial position (dashed
    line) to $P(Y=0|x_1)$, therefore extracting work $W_1$. After the
    second measurement $x_2$ the wall is moved to
    $P(Y=0|x_1,x_2)$. The process continues for $t$ steps, when the
    wall is moved to $P(Y=0|x_1,\ldots,x_t)$. Then the wall is removed
    and the system returns to the initial equilibrium state.}
\end{figure}

\medskip

We first address the issue of identifying the temperature $T$. The box
containing the particle interacts with the system generating the
signals $x_t$. Both have to be at the same temperature~$T$. The
signals $x_t$ are drawn from the distribution Eq.~(\ref{PxY})
independently, conditional on~$Y$. The signals $x_{\le t}$ can be
described as a physical system of $t$ non-interacting particles, \red{in 
equilibrium at temperature $T$. Each particle can be either in the ``right''
state $x_t=Y$, or in the ``wrong'' state $x_t=1-Y$ and the energy of 
configuration $x_{\le t}$ is the sum 
\begin{equation}
  \label{Em}
  E\{x_{\le t}|Y\}\=\sum_{\tau\le t}\left[\epsilon_w\delta_{x_\tau,1-Y}+\epsilon_r\delta_{x_\tau,Y}\right]  
  \=(t-m_t)\epsilon_w+m_t\epsilon_r
\end{equation}
of the energies of the individual particles, which equal
$\epsilon_w$ or $\epsilon_r$ depending on whether $x_\tau\not =Y$ or
$x_\tau=Y$, respectively. 
In Eq.~\eqref{Em} $m_t=\sum_{\tau\le t}\delta_{x_\tau,Y}$ is the number of
particles in the ``right'' state ($x_\tau=Y$).} 
\red{In order to define the temperature $T$, we rewrite the probability of a 
micro-state $x_{\le t}$ as a Gibbs-Boltzmann distribution
\begin{equation}
  \label{Pboltz}
  P(x_{\le t}|Y) \= \Bigl(\frac{1+\nu}{2}\Bigr)^{m_t}\,\Bigl(\frac{1-\nu}{2}\Bigr)^{t-m_t} 
  \= \frac1Z\,e^{-E\{x_{\le t}|Y\}/T}\,,
\end{equation}
where $Z$ is the normalisation constant. The thermodynamics of the 
system of signals $x_{\le t}$ is defined by the free energy
\begin{equation}
\label{eqFree}
\mathcal{F}\=-T\log Z\=\mathbb{E}[E]-TH[X_{\le t}|Y]\,,
\end{equation}
where 
\[
  H[X_{\le
    t}|Y]\=-\mathbb{E}\left[\log P(X_{\le t}|Y)\right]\=t\,\Bigl[-\,\frac{1+\nu}{2}\log\frac{1+\nu}{2}\;-\;\frac{1-\nu}{2}\log\frac{1-\nu}{2}\Bigr]
\]
is the entropy of the sequence $X_{\le t}$, conditional on $Y$. 
We require that all thermodynamic effects related to variations in $\nu$ should 
be ascribed to variations in the entropic term, i.e. that the average energy 
\[
\mathbb{E}[E]\=t\left(\frac{1+\nu}{2}\epsilon_r+\frac{1-\nu}{2}\epsilon_w\right)
\]
should be independent of $\nu$\footnote{\red{A variation of $\nu$ would correspond to a thermodynamic transformation where some work $W$ is done on the system. By the first law of thermodynamics $\Delta\mathbb{E}[E]=W - Q$, the work is related to the change in the internal energy and to the heat $Q$ absorbed by the system. Assuming that $\mathbb{E}[E]$ is independent of $\nu$ implies that all work done on the system by changing $\nu$ is turned into heat $Q$, which then results in a change of the temperature $T$.}}. Without loss of generality, we set the zero of the energy
so that $\mathbb{E}[E]=0$. This allows us to express $\epsilon_r$ in terms of $\epsilon_w$ 
and it yields $Z=e^{H[X_{\le t}|Y]}$ (see Eq.~\ref{eqFree}). These relations, combined
with the expression for $H[X_{\le t}|Y]$, turn Eq.~\eqref{Pboltz} into an identity for arbitrary 
values of $t$ and $m_t$, provided that $T$ satisfies the equation  
\[
\frac{\epsilon_w}{T} \=\frac{1+\nu}{2}\log\frac{1+\nu}{1-\nu}\,.
\]
Here $\epsilon_w$ remains as the only scale of the energy. Without loss of generality, 
we set $\epsilon_w=1$, which is equivalent to measuring $T$ and $W$ in units of 
$\epsilon_w$. With this choice, the temperature takes the value}
\begin{equation}
  \label{T}
  \boxed{\phantom{\Biggl)}T\=\Bigl(\frac{1+\nu}{2}\log\frac{1+\nu}{1-\nu}\Bigr)^{-1}\,.\phantom{\Biggr)}}
\end{equation}

This definition of the temperature in Eq.~(\ref{T}) coincides with the
micro-canonical one
\begin{equation}
  \label{Tmicro}
  \frac{\partial S}{\partial E}\= \frac{1}{T}\,,
\end{equation}
in the thermodynamic limit $t\to\infty$, as it should be, by the
equivalence of ensembles. The entropy $S$ in Eq.~\eqref{Tmicro} is the logarithm of the number
of configurations with a certain energy~$E$, i.e.
\[
  S \= \log\, \binom t{m_t} \simeq \;
  t\left[-\frac{m_t}{t}\log\frac{m_t}{t}-\left(1-\frac{m_t}{t}\right)\log\left(1-\frac{m_t}{t}\right)\right]
\]
where the last relation holds in the limit $t\to\infty$. $S$ as a
function of $E$ is obtained using Eq.~(\ref{Em}) to express $m_t$ in
terms of $E$,
i.e. \red{$m_t=\frac{t\epsilon_w-E}{\epsilon_w-\epsilon_r}$. Its derivative evaluated at the typical value of the
energy ($E=\mathbb{E}[E]=0$), with $\epsilon_w=1$,} yields Eq.~(\ref{Tmicro}),
with $T$ given by Eq.~(\ref{T}).

The market temperature $T$ depends only on $\nu$, i.e. on the fraction
of informed traders. It is a decreasing function of $\nu$. It diverges
when $\nu\to 0$, i.e. when the market is dominated by noise
traders. In the absence of noise traders (i.e. when $\nu\to 1$) the
temperature $T$ tends to zero.

\medskip
Let us now show that the protocol of optimal work 
extraction in a Szil\'ard engine operated at temperature $T$ coincides with 
the market maker strategy. Our derivation follows steps similar to
Ref.~\cite{vinkler2016analogy}.  The following work extraction
protocol is operated by an agent that observes the sequence of signals
$x_1,x_2,\ldots$ that the market maker receives. We identify this
agent with the market maker for simplicity, but that is not
necessary. The main aim of the present section is to derive an upper
bound on the work that these signals allows an observer to extract
from a Szil\'ard engine.

Let us discuss the first step $t=1$.  Let the wall divide the box in
two partitions of volumes $V_0(0)$ on the left and $V_0(1)=1-V_0(0)$
on the right of the wall. In this way, $V_0(Y)$ is the volume of the
partition containing the particle. The distribution of the particle's
position in the box is uniform, so the probability that the particle
is initially found in $Y$ is equal to $V_0(Y)$. Equivalence with the
GM model implies $V_0(1)=P(Y=1)=p$.

The market maker then observes the value of $x_1$ and she can moves
the partition quasi-statically. Let $V_f^{(0)}(Y|x_1)$ be the volume
of the partition containing the particle after this change. The work
extracted for $t=1$ is
\begin{equation}
  W_1(Y)\=T\log\frac{V_f^{(0)}(Y|x_1)}{V_0(Y)}.
\end{equation}
Note that $W_1$ is a random variable, because it depends on the
(unknown) position of the {particle~$Y$}. The expected value of $W_1$,
conditional on $X_1=x_1$ is
\[
  \mathbb{E}[W_1|X_1=x_1]\=T\sum_y
  P(y|x_1)\log\frac{V_f^{(0)}(y|x_1)}{V_0(y)}{\,,}
\]
which is maximal for
\[
  V_f^{(0)}(y|x_1)\=P(y|x_1)\,.
\]
So the maximum value of the expected work extracted at $t=1$ is
\[
  \max \mathbb{E}[W_1|X_1=x_1]\=T\,I(Y,x_1){\,,}
\]
where $I(Y,x_1)= D_{KL}\bigl(P(Y|x_1)||P(Y)\bigr)$ is the information
gained on $Y$ from the measurement~$x_1$. The expected value of
$I(Y,x_1)$ on~$x_1$ is the mutual information
$I(Y,X_1)=\sum_{y,x_1} p(y,x_1)\log\frac{p(y|x_1)}{p(y)}$.

After the first step of quasi-static expansion, without removing the
wall, we find ourselves in a situation very similar to the initial
one. The new initial volume of the partition containing the particle
is $V_0^{(1)}(Y)=V_f^{(0)}(Y|x_1)=P(Y|x_1)$, which is the probability that
the particle is in partition $Y$ of the box, given the information
$x_1$. At $t=2$ the market maker receives the signal $x_2$, and
performs a quasi-static expansion to volume
$V_f^{(1)}(Y|x_2)$. Maximising the expected work that can be
extracted, we find $V_f^{(1)}(Y|x_2)=P(Y|x_1,x_2)$.  The maximum value
of the expected work done for $t=2$ is
\[
  \max \mathbb{E}[W_2|x_1,x_2] \=T\sum_y P(y|x_1,x_2)\log\frac{P(y|x_1,x_2)}{P(y|x_1)}
  \=T\,I(Y,x_2|x_1).
\]

This argument can be extended for all $t\ge 1$. Given all the
information $x_{\le t-1}=(x_1,\ldots,x_{t-1})$ received before $t$,
the optimal policy implies that $V_0^{(t-1)}(y)=P(y|x_{\le t-1})$. The
work extracted is
\[
  W_t(Y)\=T\log\frac{V_f^{(t-1)}(Y|x_t)}{P(Y|x_{\le t-1})}\,,
\]
whose expected value is maximal for
$V_f^{(t-1)}(y|x_t)=P(y|x_{\le t})$.  The expected value of the work
extracted at time $t$ is
$\max \mathbb{E}[W_t|x_{\le t}]=TI\left(Y,x_t|x_{\le t-1}\right)$.
The total work extracted up to time $t$ following this protocol is
\begin{eqnarray}
  W_{\le t} (Y)& = & W_1(Y)+W_2(Y)+\ldots, W_t(Y)
  \\
               & = & T\log\frac{P(Y|x_{\le t})}{P(Y)}.
\end{eqnarray}
so its expected value over $Y$ is
$\max \mathbb{E}[W_{\le t}|x_{\le t}]=I\left(Y,x_{\le t}\right)$.
Taking the expected value over the realisation of the trading process
$x_{\le t}$, we find
\begin{equation}
  \label{ineqmm}
  \max \mathbb{E}[W_{\le t}]\=T\,I\left(Y,X_{\le t}\right)\;\le\; TH[Y]\,,
\end{equation}
where the last inequality {holds because}
$I\left(Y,X_{\le t}\right)=H[Y]-H[Y|X_{\le t}]$ and~\hbox{$H[Y|X_{\le t}]\ge 0$}.  
If the wall is removed at time $t$, the system
reverts back to the original equilibrium state, closing the cycle.
The inequality (\ref{ineqmm}) holds as an equality in the limit $t\to\infty$, i.e.
\begin{equation}
  \label{limitW}
  \boxed{\phantom{\biggl[}\lim_{t\to\infty} \max \mathbb{E}[W_{\le t}]\=TH[Y],\phantom{\biggr]}}
\end{equation}
because $H[Y|X_{\le t}]\to 0$ as $t\to\infty$. Indeed,
$P(y|x_{\le t})$ converges to a singleton in either $y=0$ or $y=1$,
for all typical realisations of $x_{\le t}$. 

Finally, notice that the price
\begin{equation}
  p_t\=V_0^{(t)}(1|x_{\le t})
\end{equation}
is given by the volume of the right partitions, along the
process. Therefore the feedback protocol that extracts the maximal
amount of work coincides with the optimal behaviour of the market
maker.

\subsection{The Gain of Informed Traders}

In this section we discuss the profits and losses of different market 
participants.
As discussed in Section~\ref{sec:GM}, the market-maker sets the prices 
in order to have zero expected profit at each time step. 
Since the market is a zero sum game, the expected total gain of
informed traders equals the expected total loss of noise traders. 
Therefore it is sufficient to focus attention on the gain of informed traders,
which is what we shall do in what follows.

Let $G_t$ denote the total expected gain of informed traders up to
time $t$. This satisfies the recursion relation
\begin{equation}
  \label{recG}
  G_{t+1}\=U_{t+1}\bigl[b_{t+1}(1-Y)+(1-a_{t+1})Y\bigr]\+G_t\,,
\end{equation}
where $U_{t}$ is a random variable that takes value $U_t=1$ if the
trader at time $t$ is informed and $U_t=0$ otherwise. Indeed, when
$Y=0$ an informed trader will gain $b_{t+1}$ by selling the stock,
whereas if $Y=1$ he will buy at price $a_{t+1}$ realising a gain
$1-a_{t+1}$. We are interested in the total asymptotic gain, defined as
\begin{equation}
  \label{eqG}
  G\=\lim_{t\to\infty} G_t\=\sum_{t=0}^\infty U_{t+1}\bigl[b_{t+1}(1-Y)+(1-a_{t+1})Y\bigr]\,,
\end{equation}
where the last equality derives from the recursion relation~\eqref{recG} and the initial condition $G_0=0$.
The expected value of $G$ can be computed taking conditional expectations
\begin{equation}
  \label{eqGp}
  \mathbb{E}[G]\=p\,\mathbb{E}[G|Y=1]\+(1-p)\,\mathbb{E}[G|Y=0].
\end{equation}
Taking the expectation of Eq.~(\ref{eqG}) conditional to $Y=0$, we find
\begin{eqnarray}
  \mathbb{E}[G|Y=0] & = & \nu\,\sum_{t=0}^\infty \mathbb{E}\left[ b_{t+1}\right] \nonumber \\
                    & = & \nu\,\sum_{t=0}^{\infty} \sum_{k=0}^t \,\binom tk\,
 \frac{\bigl(\frac{1-\nu}{2}\bigr)^k\,\bigl(\frac{1+\nu}{2}\bigr)^{t-k}}
      {1+\frac{1-p}{p}\bigl(\frac{1+\nu}{1-\nu}\bigr)^{t-2k+1}}\,.
                          \label{eqG0}
\end{eqnarray}
In the first line we used the fact that the variables $U_{t}$ are
independent and identically distributed, with
$\mathbb{E}[U_{t+1}]=\nu$. Eq.~(\ref{eqG0}) follows from the
expression Eq.~(\ref{bidexp}) for $b_{t+1}$ and Eq.~(\ref{binomnt})
for the binomial distribution over which the expectation is taken.
The expectation, conditional to $Y=1$ is computed in the same way,
substituting $b_{t+1}$ with $1-a_{t+1}$:
\begin{eqnarray}
  \mathbb{E}[G|Y=1]&=&\nu\,\sum_{t=0}^{\infty} \sum_{k=0}^t \,\binom tk\;
   \Bigl(\frac{1+\nu}{2}\Bigr)^k\,\Bigl(\frac{1-\nu}{2}\Bigr)^{t-k} 
   \Biggl(1-\frac{1}{1+\frac{1-p}{p}\bigl(\frac{1+\nu}{1-\nu}\bigr)^{t-2k-1}}\Biggr) \nonumber \\
                   &=&\nu\,\sum_{t=0}^{\infty} \sum_{k=0}^t \binom tk\,
    \frac{\bigl(\frac{1-\nu}{2}\bigr)^k\bigl(\frac{1+\nu}{2}\bigr)^{t-k}}
      {1+\frac{p}{1-p}\bigl(\frac{1+\nu}{1-\nu}\bigr)^{t-2k+1}}\,.
                       \label{eqG1}
\end{eqnarray}
Here we have used the change of index $k\rightarrow t-k$ in the last
step. Notice that $\mathbb{E}[G|Y=1]$ in Eq.~(\ref{eqG1}) is equal to
Eq.~(\ref{eqG0}) {with $p$ replaced by~$1-p$}. This implies
that $\mathbb{E}[G]$ is invariant under the transformation $p\to 1-p$,
as it should because the GM model enjoys the same invariance. 

From equations~\eqref{eqG0} and~\eqref{eqG1} it is clear that both
$\mathbb E[G|Y=0]$ and $\mathbb E[G|Y=1]$, and hence also their sum $\mathbb E[G]$,
are decreasing functions of~$\nu$. Also note that $\mathbb{E}[G]\to 0$ as $\nu\to 1$,
as it should be, since if all traders are informed, the market maker can guess the 
value of $Y$ from the first transaction.

\begin{figure}[ht]
  \centering \includegraphics[width=0.9\textwidth,angle=0]{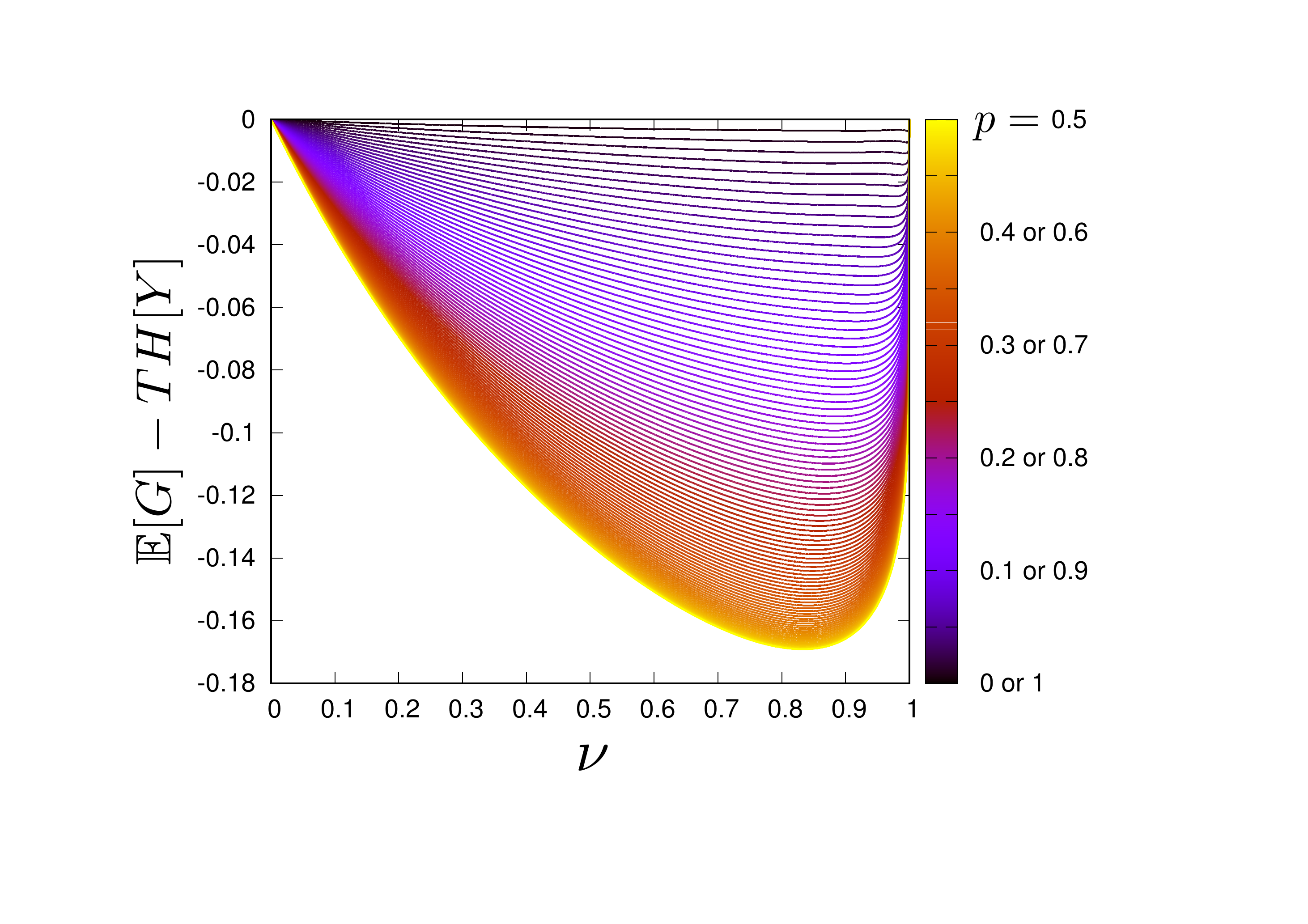}
  \caption{\label{fig:bound} Numerical evaluation of the function
    $\mathbb{E}[G]-TH[Y]$ as a function of $\nu$, for different values
    of $p$, on a scale defined by the color-code on the right.}
\end{figure}

{The expression for $\mathbb{E}[G]$ that we obtain by combining 
Eqs.~\eqref{eqGp},~\eqref{eqG0} and~\eqref{eqG1} has no evident analogue in the 
physical system. Yet numerical analysis gives strong indications in support of the 
inequality~\eqref{Gineq}, as shown in Fig.~\ref{fig:bound}. This also suggests
the limiting behavior
\begin{equation} \label{WeakAsymp}
  \lim_{\nu\to 0} \frac{\mathbb{E}[G]}{T}\=\lim_{T\to \infty} \frac{\mathbb{E}[G]}{T}\=H[Y]
\end{equation}
as $\nu\to0$ or $T\to\infty$ (recall that $\nu$ and~$T$ are related by~\eqref T).
In the next section we will give proofs of the inequality~(\ref{Gineq}) and of Eq.~(\ref{WeakAsymp}). 
In fact, in Section~\ref{sec:proofs} we will derive 
a refinement of~\eqref{WeakAsymp} to a complete asymptotic expression
\be\label{StrongAsymp}
  \mathbb{E}[G] \;\simeq \; H[Y]\,T \,-\, \frac {5p(1-p)}{3T} \+ \frac {10p(1-p)}{3T^2} 
  \,-\,\frac{2p(1-p)(173-3p+3p^2)}{45T^3} \+\cdots 
\ee
for $\mathbb{E}[G]$ as a Laurent series in~$1/T$ as $T\to\infty$ (or~$\nu\to 0$), giving a quantitative
version of the inequality~\eqref{Gineq} in the large temperature limit. The coefficients in  the 
expansion~\eqref{StrongAsymp} are all polynomials in $p(1-p)$, consistently with the invariance 
under the transformation $p\to 1-p$ discussed above.} 

The inequality~\eqref{Gineq}, which is the main result of this paper, gives an upper bound 
to the gains that informed traders can extract from their side information. 
Combined with the derivation in the previous section, it shows 
that the maximal profit that informed traders can gain cannot exceed the
maximal work that can be extracted from the analogous physical system (the Szil\'ard
information engine). This reveals a precise analogy between 
the generalized second law of thermodynamics~\eqref{gen2ndlaw} and a 
generalised efficient market hypothesis, for the GM model.
The asymptotic statements~\eqref{WeakAsymp} and~\eqref{StrongAsymp}, on the other hand, says that 
for~$\nu\to 0$ the approach of the price $p_t$ to its asymptotic value $Y$ becomes infinitely slow,
resembling a quasi-static limit in thermodynamics. In this limit, the transactions of informed traders are 
well separated in time and the market has enough time to relax to an equilibrium dominated by noise 
traders, as in a reversible process. Eq.~\eqref{WeakAsymp}, hence, is consistent with the 
observation~\cite{horowitz2011} that the generalised second law of thermodynamics holds as 
an equality for reversible feedback protocols.

\section{Asymptotics and inequalities for the expected gain function}
\label{sec:proofs}

In this section we will study the properties of the functions $\mathbb{E}[G|Y=0]$, $\mathbb{E}[G|Y=1]$ 
and $\mathbb{E}[G|Y=1]$ as defined in equations~\eqref{eqG0}, \eqref{eqG1} and \eqref{eqGp}, respectively, 
and in particular prove the asymptotic development~\eqref{StrongAsymp} and inequality~\eqref{Gineq}.

It will be convenient to replace $\nu$ by an equivalent variable~$q$, again between 0 and~1, by setting
\be\label{defq}  q \= \frac{1-\nu}{1+\nu}\,, \qquad  \nu \= \frac{1-q}{1+q}\,. \ee
We can then rewrite equations~\eqref{eqG1}, \eqref{eqG0} and \eqref{eqGp} in terms of this new variable as
\begin{equation} \label{G0G1G} \mathbb{E}[G|Y\!=\!1]\,=\,G_q(p),\;\, \mathbb{E}[G|Y\!=\!0]\,=\, G_q(1-p),\;\,
      \mathbb{E}[G]\=p\,G_q(p)\+(1-p)\,G_q(1-p), \end{equation}
where $G_q(p)$ is defined by
\begin{equation} \label{defGq} 
G_q(p)\= \sum_{t=0}^{\infty}\frac{1-q}{(1+q)^{t+1}}\sum_{k=0}^t\,\binom tk\,\frac{q^k}{1+\frac p{1-p}\,q^{2k-t-1}}\;.
\end{equation}
It is therefore this function that we have to study.

The formula for $G_q(p)$ can be simplified considerably. We first split up the sum into two parts 
according to whether $k\le t/2$ or $k>t/2$. Writing $(t, k)$ as $(2m + n,m)$ in the
first case and as $(2m + n,m + n)$ in the second leads to
\begin{equation}
  G_q(p) \= \sum_{n=0}^{\infty} \frac{P_n(q)}{1+\frac{p}{1-p}\,q^{-n-1}} 
   \+ \sum_{n=1}^{\infty} \frac{q^nP_n(q)}{1+\frac p{1-p}\,q^{n-1}}
\end{equation}
with $P_n(q)$ defined by
\begin{equation}
  P_n(q) \= \frac{1-q}{(1+q)^{n+1}}\,\sum_{m=0}^{\infty}\binom{2m+n}m\;\frac{q^m}{(1+q)^{2m}}\,.
\end{equation}
But $P_n(q)=1$ by a standard identity (that can be proved, from the binomial theorem for~$n=0$ 
and then by induction on~$n$), so the formula for $G_q(p)$ simplifies to
 $$ G_q(p) \= \sum_{n=0}^{\infty} \frac{1}{1+\frac{p}{1-p}\,q^{-n-1}} 
    \+  \sum_{n=1}^{\infty} \frac{q^n}{1+\frac{p}{1-p}\,q^{n-1}} 
  \= \sum_{n\in\mathbb Z} \frac{q^{\max(0,n)}}{1+\frac{p}{1-p}\,q^{n-1}}\;.$$
This can be rewritten as
\be\label{FtoG} G_q(p)\= F\Bigl(\frac{1-p}p,q\Bigr)\+\frac{1-p}p\,q\,F\Bigl(\frac p{1-p},q\Bigr)\,-\,\frac{(1-p)(1-q)}2\;, \ee
where $F(x,q)$ is the relatively simple function defined for all $x$ and $q$ in $\mathbb C$ with $|q|<1$ by
\be\label{defF}  F(x,q) \= \frac12\,\frac x{1+x} \+ \sum_{n=1}^\infty\frac{q^nx}{1+q^nx} \;. \ee
Eq.~\eqref{G0G1G} then gives 
\be \label{eqGsimple}
  \mathbb{E}[G] \= (1+q)\,\Bigl[p\,F\Bigl(\frac{1-p}p,q\Bigr)\+(1-p)\,F\bigl(\frac p{1-p},q\bigr)\Bigr]\,-\,p(1-p)(1-q)\,.
\ee
In the remainder of this section we will give a series of properties of the function $F(x,q)$ 
and use them and Eq.~\eqref{eqGsimple} to prove~\eqref{StrongAsymp} and~\eqref{Gineq}.  

{\bf 1.} The simplest property of $F(x,q)$ is that it satisfies the functional equation
\be  F(x,q) - F(qx,q) \= \frac12\,\Bigl(\frac x{1+x}\+\frac{qx}{1+qx}\Bigr)\,, \ee
as one sees by replacing $n$ by $n+1$ in~\eqref{defF}. A consequence is that the antisymmetrized function
\be F_-(x.q) \=  F(x,q) \,-\, F(x^{-1},q) \+ \frac{\log x}{\log q} \ee
is invariant under $x\mapsto qx$ and therefore has a Fourier expansion in $\frac{\log x}{\log q}$,
and indeed by using the Poisson summation formula we find the rapidly convergent Fourier sine expansion
\be\label{sine} F_-\bigr(e^{uh},e^{-h}\bigr) 
 \= \frac{2\pi}h\,\sum_{n=1}^\infty\frac{\sin(2\pi n u)}{\sinh(2\pi^2 n/h)}\qquad(h>0)\,. \ee
We omit the proof, since we will not use this formula. We also mention in passing that $F_-$ can be expressed
in closed form in terms of the Weierstrass zeta-function $\zeta(z;\tau)$ (the function whose derivative is the 
Weierstrass $\wp$-function; we do not give the complete definition since it plays no further role in this paper) with 
$q=e^{2\pi i\tau}$ and $x=-e^{2\pi iz}$, and that Eq.~\eqref{sine} can then also be obtained as a consequence
of the transformation behavior of $\zeta(z;\tau)$ under $(z,\tau)\mapsto(z/\tau,-1/\tau)$.

{\bf 2.}  If we expand each term $\frac{q^nx}{1+q^nx}$ ($n\ge0$) in~\eqref{defF} as a geometric series
$\sum_{r=1}^\infty(-1)^{r-1}q^{nr}x^r$ and then resum the resulting geometric series in~$n$, we obtain the formula 
\be\label{Fnew} F(x,q) \= \sum_{r=1}^\infty\frac{(-1)^{r-1}}2\,\frac{1+q^r}{1-q^r}\,x^r \ee
giving the Taylor expansion of~$F$ at~$x=0$.  This formula is only valid for $|x|<1$, but if we
retain the first $N$~terms of~\eqref{defF} and expand only the others as geometric series we 
obtain the more general hybrid series expansion
\be\label{Fhyb}  F(x,q) \= \frac12\,\frac x{1+x} \+ \sum_{n=1}^{N-1}\frac{q^nx}{1+q^nx} 
 \+ \sum_{r=1}^\infty (-1)^{r-1}\,\frac{q^{Nr}}{1-q^r}\,x^r \,,\ee
which is now valid whenever $|x|<|q|^{-N}$, and hence for any~$x\in\mathbb C$ if we take $N$ large enough.
This is also useful computationally (although the convergence of either~\eqref{defF} or~\eqref{Fnew} is
already exponential and hence good enough in practice), since truncating the second sum in~\eqref{Fhyb}
after $N$ terms gives an approximation of $F(q,x)$ up to order $q^{N^2}$ in only O($N$) rather than O($N^2$) steps.

{\bf 3.}  We now consider the behavior of $F(x,q)$ near $q=1$.  Set $q=e^{-h}$ with~$h>0$. Replacing {
$\frac 1 2 \frac{1+q^r}{1-q^r}=\frac12+\frac1{e^{rh}-1}$} in equation~\eqref{Fnew} by its Laurent expansion in terms 
of Bernoulli numbers, we get the asymptotic expansion
\begin{align}\label{Near0}  \nonumber F(x,e^{-h}) 
  & \;\sim\; \sum_{r=1}^\infty\,(-1)^{r-1}\,\biggl(\frac1{rh}\+\sum_{n=2}^\infty\frac{B_n(rh)^{n-1}}{n!}\biggr)\,x^r \\
  &\= \frac{\log(1+x)}h \+ \sum_{n=2}^\infty \frac{B_n}{n!}\,\Bigl(x\,\frac d{dx}\Bigr)^{n-1}\Bigl(\frac x{1+x}\Bigr)\,h^{n-1}
  \nonumber  \\   &\= \frac{\log(1+x)}h \+ \frac{x}{(1+x)^2}\,\frac h{12} 
  \,-\, \frac{x-4x^2+x^3}{(1+x)^4}\,\frac{h^3}{720} + \cdots \end{align}
as a Laurent series in~$h$, in which the coefficient of $h^{2k-1}$ for each $k>0$ is $\frac{B_{2k}}{(2k)!}$ 
times a polynomial of degree~$k$ in $\frac x{(1+x)^2}$ with integral coefficients.  The fact that this series
is odd implies that $F_-(x,e^{-h})$ vanishes to all orders in~$h$ as $h\!\!\searrow\!0$, 
but in fact we know from~\eqref{sine} that $F_-(x,e^{-h})=\text O(e^{-2\pi^2/h})$.

{\bf 4.} Finally, we discuss upper bounds for~$F$. The expansion~\eqref{Near0} implies that the difference
$$ F_0(x,q) \;:=\; F(x,q\bigr) \,-\, \frac1h\log(1+x)  $$
is bounded by a multiple of $h$ as $h\to0$ or~$q\to1$ with $x$ fixed.  The first part of the
following proposition makes this upper bound explicit, the second part gives an explicit upper 
bound for $F_0(x,q)$ that is independent of~$h$ (and hence stronger than the first bound when $h$ is 
large), and the third part combines these two to give a uniform upper bound valid for all values 
of $x$ and~$h$.  In each case the proof would also give a lower bound, but we do not write these 
out explicitly since they are not particularly relevant for the purposes of this paper.
 \newtheorem*{Prop}{Proposition}  
 \begin{Prop}  
For $x,\,h>0$ and $q=e^{-h}$ we have the upper bounds
 \begin{align}\label{Bound1}  F_0(x,q\bigr) &\;< \; \frac h{12}\,\cdot\;
 \begin{cases} \frac x{(1+x)^2} &\text{if $\,x\le 1$,} \\ \phantom X\frac12  &\text{if $\,x\ge 1$,} \end{cases} \\
 \label{Bound2}  F_0(x,q\bigr) &\;< \; \frac12\,\frac x{1+x}
   \,-\; \frac1h\,\cdot\,\begin{cases}  \log\bigl(\frac{1+x}{1+x\,\sqrt q}\bigr) &\text{if $\,x\le q^{-1/2}$,} \\ 
   \log\bigl(\frac{1+\sqrt q}{1+q}\bigr)\vphantom{\Bigr|} &\text{if $\,x\ge q^{-1/2}$,} \end{cases}\;.  \\
 \label{Bound3}  F_0(x,q\bigr) &\;< \; \frac12\,\frac x{1+x}\,\frac{1-q}{1+q}\qquad\text{for all $x$ and $h$.}  \end{align}
 \end{Prop}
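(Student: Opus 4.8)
The plan is to reduce all three bounds to elementary comparisons between a Riemann‑type sum and integrals of the single monotone function
\[
  \phi(s)\;=\;\frac{q^{s}x}{1+q^{s}x}\;=\;\frac1{1+e^{sh}/x}\qquad(s\in\mathbb{R}).
\]
By \eqref{defF} we have $F(x,q)-\tfrac12\frac x{1+x}=\sum_{n\ge1}\phi(n)$, and the substitution $u=q^{s}x$ gives $\int_{a}^{\infty}\phi(s)\,ds=\tfrac1h\log(1+q^{a}x)$, in particular $\int_{0}^{\infty}\phi=\tfrac1h\log(1+x)$ and $\tfrac12\phi(0)=\tfrac12\frac x{1+x}$. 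Hence
\[
  F_{0}(x,q)\;=\;\sum_{n\ge1}\phi(n)\;-\;\int_{0}^{\infty}\phi(s)\,ds\;+\;\tfrac12\phi(0),
\]
so each of \eqref{Bound1}--\eqref{Bound3} turns into an inequality of the shape $\sum_{n\ge1}\phi(n)<\int_{c}^{\infty}\phi$ for an explicit $c=c(x,q)$; e.g.\ \eqref{Bound2} in the range $x\le q^{-1/2}$ is exactly $\sum_{n\ge1}\phi(n)<\int_{1/2}^{\infty}\phi=\tfrac1h\log(1+\sqrt q\,x)$. A short computation of $\phi''$ gives the second ingredient: $\phi$ is strictly decreasing on $\mathbb R$, strictly convex on $[s^{*},\infty)$ and strictly concave on $(-\infty,s^{*}]$, where $s^{*}=\tfrac{\log x}{h}$ is the point with $\phi(s^{*})=\tfrac12$ (i.e.\ $q^{s^{*}}x=1$).

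With this in hand, two of the bounds are immediate. For \eqref{Bound2} with $x\le q^{-1/2}$ we have $s^{*}\le\tfrac12$, so $\phi$ is convex on every interval $[n-\tfrac12,n+\tfrac12]$ with $n\ge1$; summing the strict midpoint inequality $\phi(n)<\int_{n-1/2}^{n+1/2}\phi$ over $n\ge1$ gives $\sum_{n\ge1}\phi(n)<\int_{1/2}^{\infty}\phi$, which is the claim. For \eqref{Bound3} with $x\le1$, $\phi$ is convex on all of $[0,\infty)$, so $\sum_{n\ge1}\phi(n)<\int_{1/2}^{\infty}\phi=\int_{0}^{\infty}\phi-\int_{0}^{1/2}\phi$; and $\int_{0}^{1/2}\phi\ge\tfrac12\phi(\tfrac12)=\tfrac12\,\frac{\sqrt q\,x}{1+\sqrt q\,x}\ge\frac{qx}{(1+x)(1+q)}$, the last step reducing to $(1-\sqrt q)^{2}\ge0$ after clearing denominators, which is \eqref{Bound3}. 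For \eqref{Bound2} with $x\ge q^{-1/2}$ one splits off the finitely many $n$ for which $[n-\tfrac12,n+\tfrac12]\not\subset[s^{*},\infty)$, bounds those terms by the monotone estimate $\phi(n)<\int_{n-1}^{n}\phi$ and the rest by the midpoint inequality; the surviving deficit dominates $\int_{N-1}^{N-1/2}\phi$ for the appropriate integer $N$, and a short estimate of AM--GM type (using $q^{s^{*}}x=1$) shows this is $\ge\tfrac1h\log\tfrac{1+\sqrt q}{1+q}$, giving the bound; finally \eqref{Bound3} for $x>1$ follows by comparing its right‑hand side with that of \eqref{Bound2}.

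The sharp bound \eqref{Bound1} is the real content, since it must reproduce the \emph{exact} leading correction $\tfrac h{12}\frac x{(1+x)^{2}}=-\tfrac1{12}\phi'(0)$ and so cannot come from a crude sum‑vs‑integral comparison. Running one more step of Euler--Maclaurin, \eqref{Bound1} in the range where the first case applies becomes \emph{equivalent} to the positivity
\[
  \int_{0}^{\infty}\widetilde B_{2}(t)\,\phi''(t)\,dt\;>\;0,\qquad \widetilde B_{2}(t)=\{t\}^{2}-\{t\}+\tfrac16,
\]
the remainder after two Euler--Maclaurin terms being $-\tfrac12\int_{0}^{\infty}\widetilde B_{2}\,\phi''$. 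Two equivalent attacks are available: via the Fourier expansion $\widetilde B_{2}(t)=\tfrac1{\pi^{2}}\sum_{k\ge1}k^{-2}\cos(2\pi kt)$ one reduces the statement to controlling the cosine transforms $\int_{0}^{\infty}\cos(2\pi kt)\phi(t)\,dt$ (each $\ge0$ for convex decreasing $\phi$, by Pólya's criterion); alternatively, using the partial fraction $\tfrac u{12}-\bigl(\tfrac12\coth\tfrac u2-\tfrac1u\bigr)=\sum_{k\ge1}\frac{u^{3}}{2\pi^{2}k^{2}(u^{2}+4\pi^{2}k^{2})}$ in the Taylor form $F_{0}(x,q)=\sum_{r\ge1}(-1)^{r-1}\bigl(\tfrac12\coth\tfrac{rh}2-\tfrac1{rh}\bigr)x^{r}$ (valid for $|x|<1$), one is reduced to the positivity of $\sum_{r\ge1}(-1)^{r-1}\frac{r^{3}x^{r}}{r^{2}+\mu}$ for each $\mu>0$ on the relevant $x$‑range; one then passes to $|x|=1$ by continuity of $F_{0}$ and to $x>1$ via the reflection $F_{0}(x,q)=F_{0}(1/x,q)+F_{-}(x,q)$ with the exponentially small control on $F_{-}$ coming from \eqref{sine}.

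The main obstacle is precisely \eqref{Bound1}: everything else is a one‑line convexity argument, but here the naive leading term $\sum_{r\ge1}(-1)^{r-1}r^{3}x^{r}=\frac{x(1-4x+x^{2})}{(1+x)^{4}}$ is \emph{not} of one sign on $(0,\infty)$ (it is negative for $2-\sqrt3<x<2+\sqrt3$), so the required positivity genuinely involves the whole series --- equivalently, one must control the Euler--Maclaurin remainder in the regime where $\phi''$ is not monotone --- and the precise subintervals of $x$ on which each branch of \eqref{Bound1} holds are dictated exactly by where this sign competition resolves favourably. Carefully delimiting those intervals and making the constants land on the stated right‑hand sides is where the work lies; once that is done, the uniform bound \eqref{Bound3} valid for all $x,h$ follows by patching together the sharp local estimate for small $h$ with the $x\ge q^{-1/2}$ part of \eqref{Bound2} for large $h$.
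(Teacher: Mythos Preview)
Your framework is correct and your treatment of \eqref{Bound2} (both ranges) matches the paper's; your direct argument for \eqref{Bound3} when $x\le 1$ is in fact a little cleaner than the paper's route via the auxiliary function~$\delta_h$.

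The genuine gap is that you have misidentified the difficulty. Inequality~\eqref{Bound1} is \emph{not} ``the real content''; in the paper it is the easiest of the three. One writes the second-order Euler--Maclaurin remainder exactly as you do,
\[
F_0(x,q)\;=\;\frac{h}{12}\,\frac{x}{(1+x)^2}\;-\;\frac{h}{2}\int_0^\infty \overline{B_2(t/h)}\,f_x''(t)\,dt,
\qquad f_x''(t)=\frac{xe^{-t}(1-xe^{-t})}{(1+xe^{-t})^3},
\]
and then simply applies the triangle inequality with $|\overline{B_2}|\le\tfrac16$ together with the elementary evaluation
\[
\int_0^\infty |f_x''(t)|\,dt
\;=\;\begin{cases}\dfrac{x}{(1+x)^2}&\text{if }x\le 1,\\[2mm]\dfrac12-\dfrac{x}{(1+x)^2}&\text{if }x\ge 1.\end{cases}
\]
For $x\ge 1$ this yields $F_0\le\frac{h}{12}\frac{x}{(1+x)^2}+\frac{h}{12}\bigl(\frac12-\frac{x}{(1+x)^2}\bigr)=\frac{h}{24}$, which is exactly the stated bound. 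No positivity of the oscillatory integral is needed. (For $x\le 1$ this crude estimate gives only $\frac{h}{6}\frac{x}{(1+x)^2}$ rather than $\frac{h}{12}\frac{x}{(1+x)^2}$; you are right that the sharper constant is equivalent to $\int\overline{B_2}\,\phi''>0$. But this branch of \eqref{Bound1} is never used downstream, so the point is moot for the purposes of~\eqref{Gineq}.)

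Consequently, your proposed machinery for \eqref{Bound1}---Fourier expansion of $\widetilde B_2$, P\'olya's criterion, partial fractions of $\coth$, reflection through $F_-$ and \eqref{sine}---is unnecessary, and in any case you do not carry it through: you end with ``is where the work lies,'' which is a plan, not a proof. The patching you outline in your final sentence for \eqref{Bound3} at $x>1$ is correct in spirit and is what the paper does (using the $x\ge 1$ case of \eqref{Bound1} for $h$ below a numerical threshold and the $x\ge q^{-1/2}$ case of \eqref{Bound2} above another, with overlapping ranges), but it relies on having \eqref{Bound1} in hand, which your proposal does not deliver.
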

\begin{proof} We first observe that the Laurent expansion~\eqref{Near0} could also have been deduced directly 
from the definition~\eqref{defF}, rather than from the alternative formula~\eqref{Fnew}, by using 
the Euler-Maclaurin summation formula, and that the Euler-Maclaurin formula also has a finite form that gives 
an explicit upper bound for the truncation at any point.  This finite form, obtained by $K$-fold integration by parts
(see for instance Prop.~3 of~\cite{zagier2007mellin} and its proof) says that for any 
smooth function $f:[0,\infty)\to\mathbb C$ that is small at infinity and any integer~$K\ge1$ we have
$$ \frac{f(0)}2+\sum_{n=1}^\infty f(nh) \,=\, \frac1h \int_0^\infty\! f(t)\,dt \,-\, 
  \sum_{k=2}^K \frac{B_k}{k!}\,f^{(k-1)}(0)\,h^{k-1} 
  \,+\, \frac{(-h)^{K-1}}{K!}\,\int_0^\infty \overline{B_K(t/h)}\,f^{(K)}(t)\,dt $$
for all $h>0$, where $\overline{B_K(x)}=B_K(x-[x])$ is the periodic version of the $K$th Bernoulli polynomial.
Applying this with $K=2$ and $f(t)=f_x(t):=\frac{xe^{-t}}{1+xe^{-t}}$ gives
$$ F_0(x,q) \= \frac h{12}\,\frac x{(1+x)^2}
 \;-\; \frac h2\,\int_0^\infty\overline{B_2(t/h)}\;\frac{xe^{-t}(1-xe^{-t})}{(1+xe^{-t})^3}\,dt $$
and this implies our first bound~\eqref{Bound1} since 
$\bigl|\overline{B_2(x)}\bigr|\le\max\limits_{0\le x\le 1}\,\bigl|x^2-x+\frac16\bigr|=\frac16$ and
$$\int_0^\infty\biggl|\frac{xe^{-t}(1-xe^{-t})}{(1+xe^{-t})^3}\biggr|\,dt
  \= \begin{cases} \phantom X\frac x{(1+x)^2} & \text{if $0\le x\le1$,} \\
     \,\frac12 - \frac x{(1+x)^2} & \text{if $x\ge1.$} \end{cases} $$
If we took $K=1$ instead of~$K=2$ we would get $F_0(x,q)\le\frac12\,\frac x{1+x}$, and in fact for this we 
would not need the Euler-Maclaurin formula at all but simply the observation that $f_x(n)\le\int_{n-1}^nf_x(t)\,dt$
for all $n\ge1$ because $f_x$ is monotone decreasing.  To obtain the stronger inqeuality~\eqref{Bound2},
we use instead the observation that $f(n)<\int_0^{1/2}\bigl(f(n+t)+f(n-t)\bigr)\,dt$ if $f$ is convex on the 
interval~$[n-\frac12,n+\frac12]$. For $f=f_x$ this holds for all $n\ge1$ if $x\le q^{-1/2}$ and hence
$\sum_{n=1}^\infty f_x(n)\le\int_{1/2}^\infty f_x(t)\,dt = \frac1h\log\bigl(1+x\sqrt q\bigr)$,
which is equivalent to the first inequality in~\eqref{Bound2}.  If $x\ge q^{-1/2}$, then there is a unique
integer $N\ge1$ for which $q^{1/2}\,<\,x q^N \,\le\,q^{-1/2}$.  Then $f_x(t)$ is decreasing on $(0,N)$ and
convex on $\bigl(N+\frac12,\infty\bigr)$, so we get instead
$$\sum_{n=1}^\infty f_x(n)\,-\,\int_0^\infty f_x(t)\,dt \;\le\;
 -\int_N^{N+1/2} f_x(t)\,dt \= \frac1h\,\log\biggl(\frac{1+xq^N}{1+xq^{N+1/2}}\biggr)
\;\le\;\frac1h\,\log\biggl(\frac{1+q^{1/2}}{1+q}\biggr)\,, $$
proving the second inequality in~\eqref{Bound2} as well.
Finally, the inequality \eqref{Bound3}~follows from~\eqref{Bound2} if $x\le q^{-1/2}$ because the difference
$$ \delta_h(x) \,:=\;\frac1h\log\biggl(\frac{1+x}{1+x\,\sqrt q}\biggr) \,-\,\frac x{1+x}\,\frac q{1+q} $$
between their right-hand sides is~$\ge0$ for all~$x\ge0$, since $\delta_h(0)=0$ and
\begin{align*} (1+x)^2\,\delta_h^{\,\prime}(x) &\= \frac{1-\sqrt q}h \,\frac{1+x}{1+x\,\sqrt q} \,-\,\frac q{1+q} \\
  &\;\ge\; \frac{1-\sqrt q}h \,-\,\frac q{1+q} 
  \,\=\, \frac1{1+e^h}\sum_{n=2}^\infty\Bigl(1-\frac{1+(-1)^n}{2^n}\Bigr)\,\frac{h^{n-1}}{n!}\;\ge\,0\;.  \end{align*}
If $x\ge q^{-1/2}$ then the difference between the right-hand sides of~\eqref{Bound2} and~\eqref{Bound3} is 
bounded above by $\frac1{1+q}-\frac1h\log\bigl(\frac{1+\sqrt q}{1+q}\bigr)$, which is negative for $h>2.784$,
while the difference between the right-hand sides of~\eqref{Bound1} and~\eqref{Bound3} is bounded
by $\frac x{1+x}\bigl(h\frac{1+\sqrt{q}}{24}-\frac12\frac{1-q}{1+q}\bigr)$, which is negative for $h<11.969$. 
\end{proof}

We can now complete the proofs of the assertions in Section~2 of this paper by combining the formula~\eqref{eqGsimple} 
for $\mathbb E[G]$ in terms of $F(x,q)$ with the results~{\bf 1.--4.}  With the changes of variables~\eqref{defq} and 
$q=e^{-h}$ (or equivalently $\nu=\tanh(h/2)$), the market temperature defined in~\eqref T is given by
\be\label{Texp}  T \= \frac{1+q}h \= \frac{1+e^{-h}}h \= \frac2h \,-\, 1 \+ \frac h2 \,-\,\frac{h^2}6 \+ \cdots  \ee
so that $q\to1$ or $h\to0$ corresponds to large temperature, while the entropy $H[Y]$ is given by~\eqref{eq:HY}. 
Substituting the expansion~\eqref{Near0} into~\eqref{eqGsimple} therefore gives
$$ \frac{\mathbb{E}[G]}T \;\sim \; H[Y] \,-\, \frac {5p(1-p)}{12}\,h^2 
  \+\frac{p(1-p)(29+6p(1-p))}{720}\,h^4 \+\cdots\,,$$
which in view of~\eqref{Texp} is equivalent to~\eqref{StrongAsymp}, but is somewhat simpler because it is
an even power series~$h$ whereas~\eqref{StrongAsymp} is not even or odd in powers of~$1/T$.  Finally, 
substituting~\eqref{Bound3} into~\eqref{FtoG} gives
$$  (1+q)\,p\,F\Bigl(\frac{1-p}p,q\Bigr) \;<\;p\log(1/p)\,T \+ p(1-p)\,\frac{1-q}2\,.$$
Symmetrizing this with respect to~$p\leftrightarrow1-p$ and substituting into~\eqref{eqGsimple} we get 
the inequality~\eqref{Gineq}.

\section{Discussion}
\label{sec:conclusions}

This paper establishes an upper bound to the gain that informed
traders can extract from their trading activity, in the context of the
Glosten-Milgrom model~\cite{glosten1985bid}. The upper bound is
derived from an analogy with a physical system. This offers the ground
for applying the generalised second law of thermodynamics to financial
systems, suggesting how the no-arbitrage hypothesis can be generalised
in the presence of informed traders.

The key elements in the inequality are the amount of information
$H[Y]$ that informed traders have on the value $Y$ of the asset, the market temperature $T$
and the expected gain $\mathbb{E}[G]$ of informed traders. The market
temperature $T$ measures the level of noise in the market.  It
increases with the fraction of noise traders. It vanishes when these
are absent ($\nu=1$) and it diverges when the fraction of informed
traders vanishes ($\nu\rightarrow 0$). Therefore, (\ref{Gineq}) is very
similar to the bound for work extraction in information engines.
Interestingly, we find that the bound is {attained asymptotically} in the latter
limit, i.e. when $T\to\infty$. This is consistent with the fact that,
in this limit, the convergence of the price to the true value $Y$ and
the activity of informed traders is infinitely slow, as in a
quasi-static process in physics. In the analogy with physics, 
gain extraction approximates a reversible process because at
infinite temperature thermalisation is infinitely fast. Interestingly,
reversibility is the condition that allows maximal work extraction, 
as show in \cite{horowitz2011}. 

A similar analogy has been drawn in Ref.~\cite{vinkler2016analogy}
between the Szil\'ard box and gambling in a sequence of lotteries. In
that case, the optimal work extraction protocol coincides with the
optimal (betting) strategy, and the work extracted with the optimal
rate of growth of the gambler's gain. This suggests that the bound
(\ref{Gineq}) has a wider validity than the framework of
Ref.~\cite{vinkler2016analogy} or of the Glosten-Milgrom model, and it
hints at a generalised second law of thermodynamics for financial
markets. This is a very interesting avenue of further research\footnote{\red{After 
this manuscript first appeared, Pierre Carmier informed us that numerical results
support the conjecture that the inequality Eq.~\eqref{Gineq} extends to finite 
times, if $\mathbb{E}\left[G\right]$ is replaced by the expected gain up to time $t$ and $H[Y]$ 
by the mutual information $I(X_{\le t},Y)$ between $Y$ and the trading activity
up to time $t$.}}.

The present version of the GM model considers a population of an
infinite number of informed traders who behave in a competitive
fashion. The situation is very different from that of a single
informed trader, who trades \red{sequentially. The probability $\nu$ that 
a trade is executed by an informed trader becomes, in this setting, the frequency with which
the informed trader submits orders. The GM model becomes a 
repeated game of incomplete information~\cite{aumann1995repeated} 
between the market maker and the informed trader. The informed trader
can decide the frequency $\nu$ of his orders so as to maximise his gain, 
taking into account that his trading activity reveals information on $Y$ 
to the market maker. By taking an infinitesimally small value of $\nu$, 
the informed trader can access the regime where the inequality~\eqref{Gineq} 
holds asymptotically as an equality, and increase his gain by making the 
market temperature $T(\nu)$ arbitrarily large. Note that the time needed to 
accumulate the gain also diverges in the limit $\nu\to 0$, corresponding to 
a slower impact of the activity of 
the informed trader on the price dynamics. An impatient trader would 
opt for a finite frequency while at the same time leaving a more significant impact
on price's dynamics. Further work in this direction may help shed light on the 
trade-offs between time-constraints and profits, which 
are at the basis of the theories of market impact~\cite{bouchaud2009markets}.}


As the GM model shows, a simple market mechanism can allow private
information to be incorporated into
prices. Ref.~\cite{bardoscia2019lost} argues that financial
transformations, such as diversification or securitisation, degrade
private information in the sense that a considerable fraction of the
side information on financial returns is lost under aggregation. 
Extensions of the present results to multi-asset markets may shed
light on the interplay between diversification and information aggregation.

We hope the present paper will stimulate further research
in the direction of providing an information theoretic basis to
finance.

\section*{Acknowledgments}

L.T. thanks The Abdus Salam International Centre for Theoretical
Physics (ICTP) for hospitality.  We thank Edgar Rold\'an, Gonzalo
Manzano, \red{Jean-Philippe Bouchaud, Pierre Carmier and an 
anonymous referee} for stimulating discussions and comments.

\vskip 1 cm

\bibliographystyle{ieeetr}
\bibliography{ITMMGM.bib}

\end{document}